\newcommand\independent{\protect\mathpalette{\protect\independenT}{\perp}}
\def\independenT#1#2{\mathrel{\rlap{$#1#2$}\mkern2mu{#1#2}}}
\theoremstyle{definition}
\newtheorem{definition}{Definition}[]
\newtheorem{proposition}{Proposition}[]
\titleformat{\section}[block]
  {\center}
  {\thesection .}
  {1em}
  {\MakeUppercase}
\titleformat{\subsection}[hang]
  {\center \it}
  {\thesubsection}
  {1em}
  {}
\author[1]{Anders Huitfeldt}
\author[2,3]{Andrew Goldstein}
\author[4,5]{Sonja A Swanson}
\affil[1]{The Meta-Research Innovation Center at Stanford, Stanford University}
\affil[2]{Department of Medical Informatics, Columbia University}
\affil[3]{Department of Medicine, New York University}
\affil[4]{Department of Epidemiology, Erasmus MC}
\affil[5]{Department of Epidemiology, Harvard T.H. Chan School of Public Health}
\title{The choice of effect measure for binary outcomes: Introducing counterfactual outcome state transition parameters}
\begin{document}
\maketitle

\begin{abstract}

Standard measures of effect, including the risk ratio, the odds ratio, and the risk difference, are associated with a number of well-described shortcomings, and no consensus exists about the conditions under which investigators should choose one effect measure over another. In this paper, we introduce a new framework for reasoning about choice of effect measure by linking two separate versions of the risk ratio to a counterfactual causal model. In our approach, effects are defined in terms of “counterfactual outcome state transition parameters”, that is, the proportion of those individuals who would not have been a case by the end of follow-up if untreated, who would have responded to treatment by becoming a case; and the proportion of those individuals who would have become a case by the end of follow-up if untreated who would have responded to treatment by not becoming a case. Although counterfactual outcome state transition parameters are generally not identified from the data without strong monotonicity assumptions, we show that when they stay constant between populations, there are important implications for model specification, meta-analysis, and research generalization.
\end{abstract}

\section{Background}

Causal effects often differ between groups of people. Consequently, investigators are often required to reason carefully about which measures of effect, if any, can be expected to remain homogeneous between different populations or different subgroups. Investigator beliefs about effect homogeneity have important implications both for model specification, and for the choice of summary metric in meta-analysis \cite{Deeks2008EffectOutcomes}. It is commonly believed that the risk ratio is a more homogeneous effect measure than the risk difference, but recent methodological discussion has questioned the evidence for the conventional wisdom \cite{Poole2015IsMeasure,Panagiotou2015Commentary:Nature.}. 

Approaches to effect homogeneity based on standard effect measures have several noteworthy shortcomings, summarized for reference in Table 1: 

\begin{enumerate}
  \item Approaches that assume equality of the risk ratio or the risk difference may make predictions outside the bounds of valid probabilities.
  \item The odds ratio and the risk ratio have "zero-constraints": If the baseline risk in a population is either 0 or 1, approaches based on assuming that the targeted odds ratio is equal to the odds ratio in a referent population will necessarily result in concluding that the exposure has no effect. \cite{Deeks2002IssuesOutcomes.}
  \item If the risk difference, risk ratio, or odds ratio are equal across different populations, then the proportion that responds to treatment in a given population is required to be a function of that population's baseline risk. \cite{Schechtman2002OddsUse}
  \item The odds ratio is not collapsible; i.e the marginal value of the odds ratio may not be equal to a weighted average of the stratum-specific odds ratios under any weighting scheme, even in the absence of confounding and other forms of structural bias. \cite{Huitfeldt2016OnFramework}
  \item The predictions of the risk ratio model are not symmetric to whether the parameter is based on the probability of having the event, or the probability of not having the event, \textit{i.e} whether we "count the living or the dead". \cite{Sheps1958ShallDead} 
\end{enumerate}

In addition to the five points discussed above, we note that no generally applicable biological mechanism has been 
proposed that would guarantee the risk ratio, the risk difference, or the odds ratio stays constant between different populations. For earlier discussion of how biological mechanisms relate to the choice of effect measure, we point the reader to Siemiatycki and Thomas (1981) and Thompson (1991) \cite{Siemiatycki1981BiologicalCarcinogenesis.} and 
\cite{Thompson1991EffectData.}. 

\begin{table}
\def~{\hphantom{0}}
\small
\caption{Shortcomings of Effect Parameters}{%
\begin{tabular}{p{0.37\linewidth}p{0.12\linewidth}p{0.12\linewidth}p{0.13\linewidth}p{0.13\linewidth}}
 \\
 \toprule
& \textit{Risk Difference} & \textit{Risk Ratio}&  \textit{Odds Ratio}&  \textit{COST Parameters}\\[5pt]
\midrule
Predicts invalid probabilities & Yes & Yes & No & No \\
Zero-constraints& No & Yes & Yes & No \\
Baseline risk dependence& Yes & Yes & Yes & No \\
Non-collapsibility & No & No & Yes & No \\
Asymmetry to outcome variable & No & Yes & No & No \\
\bottomrule
\end{tabular}}
\label{tablelabel1}
\end{table}

In this paper, we are interested in the effect of binary treatment $A$ (e.g., a drug) on binary outcome $Y$ (e.g., a side effect) in two separate populations $P=s$ and $P=t$. In all examples, we have randomized trial evidence for the average causal effect of the treatment in study population $P=s$ , and wish to predict the effect of introducing the treatment in the target population $P=t$, in which the drug is not available and in which we can only collect observational data. Counterfactuals will be denoted using superscripts. For example, $Y^{a=0}$ is an indicator for whether an individual would have experienced side effect $Y$ if, possibly contrary to fact, she did not initiate treatment with drug \textit{A}. Because we have a randomized trial in study population $s$, the average baseline risk Pr$(Y^{a=0} =1 \vert P=s)$, and the average risk under treatment, Pr$(Y^{a=1} =1 \vert P=s)$ are identified from the data. Since treatment is not available in target population $t$, the average baseline risk, Pr$(Y^{a=0} =1 \vert P=t)$, is also identified from the data. Our goal is to use this information, in combination with some plausible assumption about effect homogeneity, to predict the average risk under treatment in the target population, Pr$(Y^{a=1} =1 \vert P=t)$.\footnote{We note that considerations other than effect heterogeneity may also be relevant to the choice of effect measure. In particular, while decision-making would benefit from information on both Pr$(Y^{a=1}\vert P=t)$ and Pr$(Y^{a=0} \vert P=t)$ in order to weigh the costs and benefits of the intervention across different outcomes, it may be sufficient to know the risk difference (RD), Pr$(Y^{a=1} - Y^{a=0}\vert  P=t)$, for example if the decision-maker is neither risk-averse nor risk-seeking over the relevant outcome (that is, if their preferences can be represented by a social welfare function which increases linearly with the number of averted incident cases). This has been used as an argument in favor of measuring effects on the additive scale. Others have argued that one should give preference to summary statistics which are readily understood by the public, and that inverse measures of effect such as the number needed to treat (NNT) are better suited for this purpose. The focus of the current paper is on situations where the investigators must present a summary measure of an effect in a study population, possibly conditional on a set of effect modifiers, for potential use across a range of different target populations. In such situations, generalizability is prioritized over these other considerations, at least concerning the choice of initial effect measure.}

Briefly, we recall that VanderWeele (2012) \cite{VanderWeele2012ConfoundingMeasure.} defined two separate types of effect heterogeneity: Effect modification in distribution (where the distributions of counterfactual variables vary across  populations) and effect modification in measure (where a particular effect measure varies across populations). Here, we will propose a novel approach to effect homogeneity based on a new class of non-identified effect parameters; and show how this framework can sometimes be used to reason about homogeneity in terms of standard, identifiable measures of effect such as the risk difference and the risk ratio. Specifically, we will consider two risk ratios, which are equivalent to the two risk ratios considered by Deeks (2012) \cite{Deeks2002IssuesOutcomes.}, and defined as follows:

\begin{center}

\[RR(-) = \frac{\text{Pr}(Y^{a=1} =1)}{\text{Pr}(Y^{a=0} =1)}\]  
\[RR(+) = \frac{1- \text{Pr}(Y^{a=1} =1)}{1-\text{Pr}(Y^{a=0} =1)} = \frac{\text{Pr}(Y^{a=1} =0)}{\text{Pr}(Y^{a=0} =0)} \] 

\end{center}

This paper is organized as follows. In part 2, we describe a scenario to illustrate the motivation for expanding upon existing effect parameters. In part 3, we introduce counterfactual outcome state transition (COST) parameters, and propose a definition of effect homogeneity based on these parameters. In part 4, we discuss the conditions under which COST parameters are identified from the data, and the implications of violations of these conditions. In part 5, we show that the COST parameters are not symmetric to the coding of the exposure variable, and discuss the implications of this observation. In part 6, we link COST parameters to substantive knowledge by providing examples of biological processes that result in effect homogeneity. In part 7, we discuss the empirical implications of our model. We conclude in part 8. 

\section{Motivating Example}

Suppose a team of investigators have data from a randomized trial where the risk of a particular side effect was 2\% under no treatment and 3\% under treatment. They wish to predict the effect in a different target population, where the baseline risk of the side effect is 10\% (Table 2). For simplicity, we postulate that this treatment does not prevent the outcome from occurring in any individual (``monotonic effect''); this assumption will be discussed in detail later in the paper. 

While discussing their data analysis plan, the first investigator postulates that the standard risk ratio $RR(-)$ is equal between these two populations. Under this assumption, he estimates that 15\% of the target population will experience the side effect under treatment. However, a second investigator believes that $RR(+)$, not $RR(-)$, might be equal between the two populations. Under this assumption, he estimates that 10.9\% of the target population will experience the side effect under treatment.

A third investigator notes that neither the first nor second investigator have substantive arguments for choosing the $RR(+)$ versus the $RR(-)$. However, she realizes that the group at risk of being adversely affected by treatment is the 90\% of the target population who were originally destined \textit{not} to experience the side effect. She points out that the first investigator's assumption (i.e. that $RR(-)$ was equal between the populations) results in a prediction that among people who are at risk of being adversely affected by the treatment, the proportion who respond is much higher in population $t$ than in population $s$, simply because the baseline risk is higher. 

Given this, the third investigator instead suggests assuming a specific probability is constant across the populations: the probability that a person who was previously not destined to experience the outcome, does not experience the outcome in response to treatment. This probability can be computed in the trial as $\frac{97\%}{98\%}\approx 99\%$ and applied to the 90\% who were originally destined to be unaffected is the target population. In this case, the third investigator's approach results in the exact same estimate as the second investigator's approach: an estimated 10.9\% of the target population will experience the side effect under treatment. 

\begin{table}

\small
\def~{\hphantom{0}}
\caption{Predicted results using different effect measures}{%
\begin{tabular}{p{0.32\linewidth}p{0.11\linewidth}p{0.11\linewidth}p{0.11\linewidth}p{0.11\linewidth}}
 \\
 \toprule
&  $RR(-)$ & $RR(+)$& $RD$& $OR$ \\[5pt]
\midrule
Pr$(Y^{a=0} =1 \vert P=s)$ & 2\% & 2\% & 2\% & 2\% \\ 
Pr$(Y^{a=1} =1 \vert P=s)$ & 3\% & 3\% & 3\% & 3\% \\ 
Effect & $RR(-)$=1.5 & $RR(+)$=0.99 & $RD$=0.01 & $OR$=1.515 \\
Pr$(Y^{a=0} =1 \vert P=t)$ & 10\% & 10\% & 10\% & 10\% \\ 
Predicted Pr$(Y^{a=1} =1 \vert P=t)$& 15\% & 10.9\% & 11\% & 14.4\%\\
\bottomrule
\end{tabular}}
\label{tablelabel2}
\end{table}

Variations of the third investigator's arguments have arisen independently multiple times in the literature, dating back more than half a century \cite{Sheps1958ShallDead,Deeks2002IssuesOutcomes.}, but these recommendations have rarely been translated into common practice. Throughout the rest of this paper, we will formalize this line of reasoning in a counterfactual causal model, in order to explore its scope, limits and implications.

\section{Definition of the Counterfactual Outcome State Transition Parameters}

We define ``counterfactual outcome state transition'' (COST) parameters based on the probability that a person who becomes a case if untreated remains a case if treated, and by the probability that a person who does not become a case if untreated remains a non-case if treated. We also offer interpretations of these quantities in terms of the four deterministic response types \cite{Greenland1986IdentifiabilityConfounding.} (Table 3). A list of parameters considered in this paper is shown in Table 4.

\begin{definition}
$G$ is defined as the probability of being a case if treated, among those who would have been a case if untreated:  \[G=\text{Pr}(Y^{a=1}=1\vert Y^{a=0}=1)\] 
\end{definition}

\begin{definition}
$H$ is defined as the probability of not being a case if treated, among those who would not have been a case if untreated:  \[\ H=\text{Pr}(Y^{a=1}=0\vert Y^{a=0}=0)\] 
\end{definition}

In a deterministic model, $G$ can be interpreted as the proportion who are ``Doomed'', among those who are either ``Doomed'' or ``Preventative'' and $H$ can be interpreted the proportion who are ``Immune'', among those who are either ``Immune'' or ``Causal''. We will refer to $G$ and $H$ as the COST parameters for introducing treatment. $G$ and $H$  can be indexed for a specific population with subscripts (e.g., $G_s$ is the parameter $G$ in population $s$). If the COST parameters for introducing treatment are equal between two populations (i.e. if $G_t= G_s$ and $H_t= H_s$,) this can equivalently be written as \[Y^{a=1}\independent P \vert Y^{a=0}\]
Similar conditions were considered in different contexts by Gechter (2015) \cite{Gechter2015GeneralizingIndia} and by Athey and Imbens (2006) \cite{Athey2006IdentificationModels}. In our setting of binary outcomes, this definition of effect homogeneity addresses all previously discussed limitations of the standard effect measures: The underlying parameters have no baseline risk dependence, do not produce logically impossible results, are collapsible over arbitrary baseline covariates, and have no zero constraints. Further, if the coding of the outcome is altered, the only consequence is that the values of $G$ and $H$ are reversed. (See appendix 1)

\begin{table}
\def~{\hphantom{0}}
\small
\begin{center}
\begin{threeparttable}
\caption{Counterfactual Response Types for Binary Outcomes}{%
\begin{tabular}
{p{0.15\linewidth}p{0.35\linewidth}p{0.25\linewidth}}
 \\
 \toprule
\textit{Type of individual} & \textit{Description of response type} & \textit{Potential outcomes} \\[5pt]
\midrule
Type 1& No effect of treatment
(Individual \textit{Doomed} to get the disease with respect to exposure)
 & $Y^{a=0}=1$, $Y^{a=1}=1$ \\ 
Type 2 & Exposure \textit{Causative} (Individual susceptible to exposure)
 & $Y^{a=0}=0$, $Y^{a=1}=1$\\ 
Type 3 & Exposure \textit{Preventative}
(Individual susceptible to exposure) & $Y^{a=0}=1$, $Y^{a=1}=0$\\
Type 4 & No effect of treatment (Individual \textit{Immune} from disease with respect to exposure) & $Y^{a=0}=0$, $Y^{a=1}=0$\\
\bottomrule
\end{tabular}}
\label{tablelabel3}
\begin{tablenotes}
Note that, under these definitions, it readily follows that Pr$(Y^{a=0}=1)$ = Pr(Doomed) + Pr(Preventative) and that Pr$(Y^{a=1}=1)$= Pr(Doomed) + Pr(Causal).
\end{tablenotes}
\end{threeparttable}
\end{center}
\end{table}

\begin{table}
\begin{center}
\begin{threeparttable}
\caption{List of parameters}{%
\small
\begin{tabular}{p{0.09\linewidth}p{0.30\linewidth}p{0.20\linewidth}p{0.25\linewidth}}
 \\
 \toprule
\textit{Parameter} & \textit{Definition}& \textit{Key conditions for identification}& \textit{Identifying expression} \\[5pt]
\midrule
$RR(-)$ & \(\frac{\text{Pr}(Y^{a=1}=1)}{\text{Pr}(Y^{a=0}=1)}\)  & No confounding & \( \frac{\text{Pr}(Y=1 \vert A=1 )}{\text{Pr}(Y=1\vert A=0)}\) \\ 
$RR(+)$ & \( \frac{\text{Pr}(Y^{a=1}=0)}{\text{Pr}(Y^{a=0}=0)} \) & No confounding & \( \frac{\text{Pr}(Y=0 \vert A=1 )}{\text{Pr}(Y=0 \vert A=0)}\)  \\ 
$G$ & \(\text{Pr}(Y^{a=1}=1\vert Y^{a=0}=1)\)  & No confounding and non-increasing monotonicity & $RR(-)$ \\ 
$H$ & \(\text{Pr}(Y^{a=1}=0\vert Y^{a=0}=0)\)  & No confounding and non-decreasing monotonicity & $RR(+)$ \\ 
$I$ & \(\text{Pr}(Y^{a=0}=1\vert Y^{a=1}=1)\)  & No confounding and non-decreasing monotonicity & \( \frac{1}{RR(-)} \)\\ 
$J$ & \(\text{Pr}(Y^{a=0}=0\vert Y^{a=1}=0)\)  & No confounding and non-increasing monotonicity & \( \frac{1}{RR(+)} \)\\ 
\bottomrule
\end{tabular}}
\label{tablelabel4}
\begin{tablenotes}
All parameters shown above are defined separately in populations $s$ and $t$. Whenever we need to clarify the population in which the parameter is defined, subscripts are used (e.g., $RR(-)_s$).
\end{tablenotes}
\end{threeparttable}
\end{center}
\end{table}

\section{Identification of the Counterfactual Outcome State Transition Parameters}

Now that we have defined COST parameters, and described their motivation and attractiveness, we proceed to discuss how we can compute them in our studies. In an ideal randomized trial, we can identify the standard effect measures (RR, RD) without further assumptions beyond those expected to hold by design. Unfortunately, this is not the case for COST parameters: COST parameters are generally not identifiable without further assumptions. Therefore, even if we (somehow) knew that the parameters \textit{G} and \textit{H} are equal between two populations, we may not be able to use this fact to predict what happens when we introduce treatment in the target population. We next introduce assumptions that lead to identifiability (Propositions 1, 4), and discuss scientific and clinical implications when we are (Propositions 2, 5) and when we are not (Propositions 3, 6) willing to make those assumptions. The key identifiability assumption is monotonicity \cite{VanderWeele2010SignedInference., VanderWeele2009PropertiesGraphs}: We say there is non-increasing monotonicity if individuals who do not get the outcome if untreated, do not get the outcome if treated: Pr$(Y^{a=1}=1 \vert Y^{a=0}=0) = 0$. In other words, $H=1$. Similarly, non-decreasing monotonicity occurs if individuals who get the outcome if untreated also get the outcome if treated, in which case $G=1$. 

Monotonicity is defined in the context of the specific exposure-outcome under consideration, and its plausibility therefore needs to be evaluated on a case-by-case basis. This means that within a trial, we may need to consider the plausibility of monotonocity for each outcome of interest separately. For instance, the antiarrhythmic agent Amiodarone can cause arrhythmia in some individuals and prevent it in others, and therefore does not have monotonic effects for the arrhythmia outcome. However, if the outcome of interest is a side effect such as pulmonary fibrosis, monotonicity may be a viable assumption because the drug is unlikely to prevent any individual from getting pulmonary fibrosis. In general, monotonicity could be a more reasonable approximation in situations where the outcome is a side effect strongly associated with pharmacological treatment.

\begin{proposition}
\label{proposition1}
If treatment monotonically reduces the incidence of event \textit{Y}, then \textit{G} is identified from the data of a randomized trial, and is equal to the standard risk ratio, \textit{RR(-).}
\end{proposition}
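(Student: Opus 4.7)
The plan is to establish identification in two separate steps: first, show that under the monotonicity assumption, $G$ equals the causal risk ratio $RR(-)$ defined in terms of counterfactuals; second, invoke the fact that in a randomized trial the causal risk ratio coincides with the observed risk ratio $\frac{\text{Pr}(Y=1\vert A=1)}{\text{Pr}(Y=1\vert A=0)}$.

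For the first step, I would apply the law of total probability to $\text{Pr}(Y^{a=1}=1)$, conditioning on $Y^{a=0}$:
\[
\text{Pr}(Y^{a=1}=1) = \text{Pr}(Y^{a=1}=1\vert Y^{a=0}=1)\,\text{Pr}(Y^{a=0}=1) + \text{Pr}(Y^{a=1}=1\vert Y^{a=0}=0)\,\text{Pr}(Y^{a=0}=0).
\]
The hypothesis that treatment monotonically reduces the incidence of $Y$ is precisely the statement that no individual is of the ``Causal'' response type (Type 2 in Table 3), i.e.\ $\text{Pr}(Y^{a=1}=1\vert Y^{a=0}=0)=0$ (equivalently $H=1$). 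Substituting this into the decomposition collapses the sum to $\text{Pr}(Y^{a=1}=1) = G\cdot\text{Pr}(Y^{a=0}=1)$, and dividing through (assuming the baseline risk is nonzero) yields $G=\frac{\text{Pr}(Y^{a=1}=1)}{\text{Pr}(Y^{a=0}=1)}=RR(-)$.

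For the second step, I would appeal to the standard fact that randomization of $A$ implies $Y^{a}\independent A$ for each $a\in\{0,1\}$, so that the counterfactual risks equal the corresponding conditional risks in the trial data. This delivers the identifying expression $\frac{\text{Pr}(Y=1\vert A=1)}{\text{Pr}(Y=1\vert A=0)}$ listed in Table 4 and hence identifies $G$.

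The only real subtlety is the boundary case $\text{Pr}(Y^{a=0}=1)=0$: then $G$ is not well-defined (the conditioning event has probability zero), but monotonicity also forces $\text{Pr}(Y^{a=1}=1)=0$, and both counterfactual risks coincide with observables under randomization. I would handle this with a brief remark that the identification claim is vacuous or can be interpreted as any value in $[0,1]$ in that degenerate case, so that no real obstacle arises.
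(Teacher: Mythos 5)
Your proof is correct and follows essentially the same route as the paper: both rest on the decomposition $\text{Pr}(Y^{a=1}=1) = G\cdot\text{Pr}(Y^{a=0}=1) + (1-H)\cdot\text{Pr}(Y^{a=0}=0)$ (the paper derives it via response types, you via the law of total probability, which is the same thing) and then set $H=1$ by monotonicity. Your explicit treatment of the randomization step and of the degenerate case $\text{Pr}(Y^{a=0}=1)=0$ is a small added care that the paper leaves implicit, but it does not change the argument.
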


\begin{proposition}
\label{proposition2}
If the counterfactual outcome state transition parameters for introducing treatment are equal between populations $s$ and $t$, and if treatment monotonically reduces the risk of the outcome, then $RR(-)$ in the target population is equal to $RR(-)$ in the study population.
\end{proposition}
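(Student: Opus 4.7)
The plan is to prove Proposition 2 as an immediate corollary of Proposition 1, by applying Proposition 1 separately in each population and then using the cross-population equality of the COST parameters to chain the conclusions. First I would note that the monotonicity hypothesis ``treatment monotonically reduces the risk of the outcome'' is a universal, individual-level property of the treatment-outcome mechanism, so it applies in both population $s$ and population $t$; equivalently, if there is any doubt, the equality of COST parameters forces $H_s = H_t$, so monotonicity ($H=1$) in one population transfers automatically to the other.

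Next I would invoke Proposition 1 twice. Applied in the study population it gives $G_s = RR(-)_s$, and applied in the target population it gives $G_t = RR(-)_t$. The hypothesis that the COST parameters for introducing treatment are equal across populations supplies the missing link $G_s = G_t$. Chaining these three equalities gives $RR(-)_s = RR(-)_t$, which is the claim.

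I do not anticipate a substantive obstacle: the argument is essentially bookkeeping once Proposition 1 is available. The only point worth stating explicitly in the write-up is the transfer of the monotonicity premise from one population to the other, since Proposition 1 must be legitimately invoked in population $t$ as well as in population $s$. Aside from that, the proof is a two-line chain of equalities.
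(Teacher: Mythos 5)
Your proposal is correct and follows essentially the same route as the paper: the paper's proof likewise derives $T_1 = T_0 \times G_t$ by re-running the Proposition 1 argument in population $t$, substitutes $G_t = G_s$ by the equal-COST-parameters hypothesis, and then applies Proposition 1 in population $s$ to conclude $RR(-)_t = RR(-)_s$. Your explicit remark that monotonicity (i.e., $H=1$) transfers to population $t$ via $H_s = H_t$ is a small clarification the paper leaves implicit, but it does not change the argument.
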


\begin{proposition}
\label{proposition3}
If the counterfactual outcome state transition parameters for introducing treatment are equal between populations $s$ and $t$, and if treatment reduces the incidence of $Y$ but not monotonically, then $RR(-)$ in the study population is a biased estimate of $RR(-)$ in the target population.\footnote{If there is substantial non-monotonicity or if the baseline risks differ substantially between the populations, the bias term gets very large, resulting in highly biased estimates which may even be on the wrong side of 1. For example, if $G$ is 0.05 and $H$ is 0.99 in both populations, and the baseline risk is 0.005 in the study population but 0.05 in the target population, then the true risk ratio in the target population is 0.24, but the risk ratio in the study population is 2.05. We therefore caution against using $RR(-)_s$ as an approximation of  $RR(-)_t$ unless it is plausible, in both populations, to expect "near-monotonicity", defined as $G \times $Pr$(Y^{a=0}=1) \gg (1-H) \times ($Pr$(Y^{a=0}=0)$. This condition, which ensures that $G$ is approximately equal to $RR(-)$, will be met if the prevalence of the "doomed" response type is much higher than the prevalence of the "causative" response type. In the setting of a randomized trial, this might be a reasonable approximation if there is reason to believe that a great majority of the events in the treatment arm would have occurred even in the absence of treatment.} If the baseline risk in the target population is lower than in the study population, then the true risk under treatment in the target population is higher than what would be predicted by assuming $RR(-)_s = RR(-)_t$, whereas the opposite holds if the baseline risk is higher in the target population. The magnitude of the bias is a function of the extent of non-monotonicity, and of the ratio of baseline risk between the two populations (see appendix 2).
\end{proposition}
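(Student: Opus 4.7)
The plan is to derive a single explicit identity expressing $RR(-)$ in terms of $G$, $H$, and the baseline risk $r := \Pr(Y^{a=0}=1)$, and then read off all three claims (non-zero bias, its sign, and its functional form) directly from that identity applied in each population.

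First I would condition on the binary variable $Y^{a=0}$ and use the law of total probability:
\[
\Pr(Y^{a=1}=1) \;=\; G \cdot r \;+\; (1-H)\,(1-r).
\]
Dividing by $r$ gives the master identity
\[
RR(-) \;=\; G \;+\; (1-H)\,\frac{1-r}{r}.
\]
This displays the role of non-monotonicity transparently: when $H=1$ the second term vanishes and $RR(-)=G$, so $RR(-)$ is a pure population-invariant quantity; when $H<1$, $RR(-)$ picks up an explicit dependence on the baseline risk, even though $G$ and $H$ themselves do not.

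Next, applying this identity separately in populations $s$ and $t$ and using $G_s=G_t=G$ and $H_s=H_t=H$, I subtract to obtain
\[
RR(-)_s - RR(-)_t \;=\; (1-H)\cdot \frac{r_t - r_s}{r_s\, r_t}.
\]
By hypothesis, treatment reduces risk but not monotonically; monotonic reduction is exactly the condition $H=1$ (no ``causal'' response type), so non-monotonic reduction forces $H<1$ and the prefactor $1-H$ is strictly positive. Hence $RR(-)_s \neq RR(-)_t$ whenever $r_s \neq r_t$, which is the first conclusion. For the direction, the prediction of the target-population risk under the mistaken assumption $RR(-)_s=RR(-)_t$ is $r_t\cdot RR(-)_s$, while the truth is $r_t\cdot RR(-)_t$; their difference equals $(1-H)(r_t-r_s)/r_s$. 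When $r_t<r_s$ this is negative, so the true target risk exceeds the prediction, and symmetrically when $r_t>r_s$. Finally, the magnitude factors into the extent of non-monotonicity $1-H$ and a function of $r_s,r_t$ alone, matching the final sentence of the statement and making the footnote's quantitative remarks immediate.

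The only real obstacle is the interpretive point that non-increasing monotonicity corresponds to $H=1$ rather than $G=1$, so that the violation of monotonicity in the direction of reduction is precisely what turns on the $(1-H)$ term. Once that is settled, every claim, including the example values and the ``near-monotonicity'' diagnostic $G\cdot r \gg (1-H)(1-r)$ quoted in the footnote, drops out as a one-line consequence of the master identity.
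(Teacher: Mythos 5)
Your proposal is correct and rests on exactly the same key identity as the paper's proof, namely $\Pr(Y^{a=1}=1)=G\cdot r+(1-H)(1-r)$ applied in both populations with $G$ and $H$ held fixed; your ``master identity'' for $RR(-)$ and the resulting bias expression $(1-H)(r_t-r_s)/r_s$ coincide with the paper's $\hat{T_1}-T_1=(F-1)(1-H)$ after substituting $F=r_t/r_s$. The only difference is presentational: you subtract the two risk ratios directly, while the paper expands $\hat{T_1}$ and $T_1$ and cancels terms, so no further comparison is needed.
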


The following three propositions are exactly symmetric to the preceding results, but apply to situations where exposure increases the risk of the outcome (i.e., monotonicity holds in the other direction). In such situations, we identify 
$H$ instead of $G$:

\begin{proposition}
\label{proposition4}
If treatment monotonically increases the incidence of event $Y$, then $H$ is identified from the data of a randomized trial, and is equal to the recoded risk ratio $RR(+)$
\end{proposition}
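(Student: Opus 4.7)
The plan is to mirror the argument for Proposition~\ref{proposition1} by exploiting the exact symmetry between non-increasing and non-decreasing monotonicity, together with the duality between counting the "dead" ($Y=1$) versus the "living" ($Y=0$). Non-decreasing monotonicity, as formalized in the paragraph before Proposition~\ref{proposition1}, asserts that anyone who would have been a case under no treatment is also a case under treatment, i.e.\ $\text{Pr}(Y^{a=1}=1 \vert Y^{a=0}=1) = 1$, or equivalently $G=1$. Equivalently, no one transitions from being a case to being a non-case upon treatment: $\text{Pr}(Y^{a=1}=0 \vert Y^{a=0}=1) = 0$.

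The first step is to expand the marginal counterfactual probability of being a non-case under treatment by the law of total probability, conditioning on the untreated counterfactual:
\[
\text{Pr}(Y^{a=1}=0) = \text{Pr}(Y^{a=1}=0 \vert Y^{a=0}=0)\,\text{Pr}(Y^{a=0}=0) + \text{Pr}(Y^{a=1}=0 \vert Y^{a=0}=1)\,\text{Pr}(Y^{a=0}=1).
\]
The second summand vanishes by monotonicity, so rearranging yields
\[
H = \text{Pr}(Y^{a=1}=0 \vert Y^{a=0}=0) = \frac{\text{Pr}(Y^{a=1}=0)}{\text{Pr}(Y^{a=0}=0)} = RR(+).
\]

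The second step is to argue that $RR(+)$ is identified from a randomized trial. Randomization guarantees $Y^{a} \independent A$ for each treatment value, so $\text{Pr}(Y^{a=1}=0) = \text{Pr}(Y=0 \vert A=1)$ and $\text{Pr}(Y^{a=0}=0) = \text{Pr}(Y=0 \vert A=0)$, which matches the identifying expression listed for $RR(+)$ in Table~\ref{tablelabel4}. Chaining this with the previous display shows that $H$ is identified and equals $RR(+)$.

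There is no real obstacle: the proof is a one-line conditional expansion plus an appeal to randomization, and the work is mostly bookkeeping to ensure that "non-decreasing" is handled in the direction that makes the $\{Y=0\}$ transition deterministic (which is why $H$, and not $G$, is the identified COST parameter in this regime). The only point worth flagging in the writeup is the parallel with Proposition~\ref{proposition1}: swapping $(G, RR(-), \text{non-increasing})$ for $(H, RR(+), \text{non-decreasing})$ converts one proof into the other verbatim.
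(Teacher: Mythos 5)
Your proof is correct and takes essentially the same approach as the paper: both rest on the law-of-total-probability decomposition of the treated counterfactual over strata of $Y^{a=0}$, with non-decreasing monotonicity ($G=1$) annihilating one term. The only cosmetic difference is that you decompose $\text{Pr}(Y^{a=1}=0)$ directly, which yields $H = RR(+)$ in one step, whereas the paper decomposes $\text{Pr}(Y^{a=1}=1)$ and then solves algebraically for $H_s$.
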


\begin{proposition}
\label{proposition5}
If the counterfactual outcome state transition parameters for introducing treatment are equal between populations $s$ and $t$, and if treatment monotonically increases the risk of the outcome, then $RR(+)$ in the study population is equal to $RR(+)$ in the target population.
\end{proposition}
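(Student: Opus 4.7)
The plan is to reduce Proposition 5 directly to Proposition 4 applied in each of the two populations, leveraging the equality of COST parameters to transport the identifying expression. Since Proposition 5 is the exact symmetric counterpart of Proposition 2, the structure of the argument should mirror that proof with the roles of $G$ and $H$ (and of $RR(-)$ and $RR(+)$) exchanged.

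First, I would observe that the assumption of non-decreasing monotonicity is really a statement about the COST parameter $G$: treatment monotonically increasing the risk of $Y$ means that $\Pr(Y^{a=1}=0 \vert Y^{a=0}=1)=0$, which is equivalent to $G=1$. If I assume this monotonicity holds in the study population $s$, then because the COST parameters are equal between populations ($G_s=G_t$ and $H_s=H_t$ by hypothesis), monotonicity must hold in the target population $t$ as well. So Proposition 4 is available in both $s$ and $t$.

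Next, I would apply Proposition 4 in each population separately. In the study population, Proposition 4 gives $H_s = RR(+)_s$, where the right-hand side is identified from the randomized trial data by the expression in Table 4. In the target population, the same reasoning yields $H_t = RR(+)_t$. Combining these two identities with the hypothesized equality $H_s = H_t$ immediately produces $RR(+)_s = RR(+)_t$, which is the claim.

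There is no real obstacle here beyond verifying that monotonicity transports between populations under the equal-COST-parameters assumption, which is automatic because monotonicity is encoded as $G=1$. The proof is essentially a corollary of Proposition 4, and the only thing worth flagging explicitly is the symmetry with Proposition 2: swapping the direction of the monotonicity assumption swaps which COST parameter ($G$ versus $H$) is identified and therefore which recoded risk ratio ($RR(-)$ versus $RR(+)$) is the transportable summary.
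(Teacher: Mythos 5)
Your proof is correct and follows essentially the same route as the paper: apply the Proposition 4 identity $H = RR(+)$ in each population separately and chain through the assumed equality $H_s = H_t$. Your explicit remark that monotonicity transports between populations because it is encoded as $G=1$ and $G_s=G_t$ is a detail the paper leaves implicit, but the argument is otherwise the same.
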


\begin{proposition}
\label{proposition6}
If the counterfactual outcome state transition parameters for introducing treatment are equal between populations $s$ and $t$, and if treatment increases the incidence of $Y$ but not monotonically, then $RR(+)$ in the study population is a biased estimate of $RR(+)$ in the target population.
\end{proposition}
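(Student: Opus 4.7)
The plan is to mirror the argument behind Proposition 3 (see appendix 2), exploiting the symmetry between $G$ and $H$ that arises when the outcome is recoded $Y \to 1-Y$. Equivalently, I would prove a direct algebraic identity expressing $RR(+)$ in any population $p$ as a function of $G_p$, $H_p$, and the baseline risk, and then read off the conditions under which that identity is population-invariant.

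The first step is to condition on $Y^{a=0}$ and apply the law of total probability, obtaining
\[
\text{Pr}(Y^{a=1}=0 \vert P=p) = H_p\,\text{Pr}(Y^{a=0}=0 \vert P=p) + (1-G_p)\,\text{Pr}(Y^{a=0}=1 \vert P=p),
\]
where the two conditional probabilities are $H_p$ and $1-G_p$ by their definitions. Dividing through by $\text{Pr}(Y^{a=0}=0 \vert P=p)$ yields the key identity
\[
RR(+)_p = H_p + (1-G_p)\,\frac{\text{Pr}(Y^{a=0}=1 \vert P=p)}{\text{Pr}(Y^{a=0}=0 \vert P=p)}.
\]
Plugging in the hypothesis $G_s = G_t$ and $H_s = H_t$, I observe that $RR(+)_p$ then depends on $p$ only through the baseline odds of the outcome. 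Because treatment increases incidence non-monotonically, the preventive response type must be present, so $G_s = G_t < 1$; the coefficient of the baseline-odds term is therefore strictly positive, and $RR(+)_s = RR(+)_t$ fails whenever the two populations differ in baseline risk.

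To pin down the direction and magnitude of the bias, I would note that the mapping from baseline odds to $RR(+)$ is strictly increasing with slope $1-G$. Consequently, if the baseline risk in the target population exceeds that in the study population, then $RR(+)_t > RR(+)_s$, and using $RR(+)_s$ as a surrogate understates $RR(+)_t$; translated back to the risk scale, this overstates the target risk under treatment $\text{Pr}(Y^{a=1}=1 \vert P=t)$, with the reverse inequalities holding when the target baseline risk is lower. The absolute bias in $RR(+)$ is exactly $(1-G)$ times the difference in baseline odds between the two populations, which is the promised symmetric analogue of the bias characterization in Proposition 3.

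I do not anticipate a substantive obstacle. The only nuance worth making explicit is why $G < 1$ (rather than $H < 1$) is the relevant failure of monotonicity here. In the setting of Proposition 6 the average effect is causative, so the "natural" monotonicity is non-decreasing, i.e.\ $G = 1$; the phrase "not monotonically" therefore means that some individuals who would have been cases absent treatment would not have been cases under treatment (preventive responders), which is precisely the condition $G < 1$. I would make this explicit by briefly referring back to the four response types in Table 3 before writing down the decomposition.
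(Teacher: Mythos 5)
Your proposal is correct and rests on the same law-of-total-probability decomposition over \(Y^{a=0}\) that underlies the paper's Proposition 3, which the paper's own proof of Proposition 6 simply invokes by symmetry. Your packaging of it as the affine identity \(RR(+)_p = H_p + (1-G_p)\,\mathrm{Pr}(Y^{a=0}=1 \vert P=p)/\mathrm{Pr}(Y^{a=0}=0 \vert P=p)\) is a clean way to exhibit the dependence on baseline risk, the correct directional conclusion, and the exact bias term, and your identification of non-monotonicity with \(G<1\) in this setting is the right reading.
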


Proofs of propositions (1-6) are provided in Appendix 2. 

\section{Asymmetry to the Coding of the Exposure Variable}

So far, we have focused on problems -- and resolutions to some problems -- related to how the investigator encodes the \textit{outcome} variable in the database. However, COST parameters are not invariant to the coding of the exposure variable. To illustrate, reconsider the trial where the risk under treatment was 3\% and the risk under no treatment was 2\%. If we reverse the coding of the exposure variable, we notice that the new exposure variable in fact \textit{reduces} the risk of the outcome, meaning that a naive application of our approach would suggest using the standard (but now inverted) risk ratio $\frac{1}{RR(-)}$ to estimate the probability of being unaffected by treatment. However, there is a subtle but important distinction from the earlier approach: by changing  the definition of exposure, we have also changed the meaning of the parameter so that it is now defined as the probability of being unaffected by treatment among those who would have become cases \textit{under treatment}, rather than among those who would have become cases under no treatment. This can be conceptualized as the effect of \textit{removing} treatment from a fully treated population.

We will refer to the COST parameters associated with removing treatment as $I$ and $J$. The choice of coding of the exposure variable is then equivalent to choosing whether to model equality of effect based on the parameters $G$ and $H$, or based on the parameters $I$ and $J$. For notational simplicity, we will continue to use the original coding of the exposure variable, and instead frame the question in terms of whether an investigator should define equality of effects based on the parameters $G$ and $H$, or the parameters 
$I$ and $J$.

\begin{definition}
$I$ is defined as the probability of being a case if untreated, among those who would have been a case if treated:  \[I=\text{Pr}(Y^{a=0}=1\vert Y^{a=1}=1)\] 
\end{definition}

\begin{definition}
$J$ is defined as the probability of not being a case if untreated, among those who would not have been a case if treated:  \[\ J=\text{Pr}(Y^{a=0}=0\vert Y^{a=1}=0)\] 
\end{definition}

In a deterministic model, $I$ can be interpreted as the proportion who are ``Doomed'', among those who are either ``Doomed'' or ``Causal'', and $J$ can be interpreted the proportion who are ``Immune'', among those who are either ``Immune'' or ``Preventative''. If the COST parameters for removing treatment are equal between two populations (i.e. if $I_t= I_s$ and $J_t= J_s$,) this can equivalently be written as \(Y^{a=0}\independent P \vert Y^{a=1}\). With these definitions, results exactly analogous to propositions 1 through 6 can be derived for $I$ and $J$.

As implied earlier in this section, one can construct examples to show that equality of COST parameters for introducing treatment does not imply equality of COST parameters removing treatment. In fact, if the baseline risks differ, then the two homogeneity conditions rarely hold simultaneously (an obvious exception to this would be under the sharp causal null hypothesis). This observation arguably presents a major conceptual challenge to the claim that our definition captures the intuitive idea of ``equal effects'': to make our inferences invariant to the coding of the outcome, we have made them dependent on the coding of the exposure. However, in the next section, we show that, with some background knowledge about biological mechanisms, it is possible to reason about whether the COST parameters for introducing treatment are more likely to be homogeneous than the COST parameters for removing treatment. In particular, we show that it is possible to provide models for the data-generating process that guarantee equality of the COST parameters for introducing treatment, equality of the COST parameters for removing treatment, or neither.

\section{Biological Knowledge and Equality of Treatment Effects}

While it is usually not possible to reason \textit{a priori} about whether the risk difference or odds ratio are equal between populations, we now provide a simple example to show that it is sometimes possible to reason based on biological knowledge about whether the COST parameters are equal between populations. This is intended only as a proof-of-concept, and the example is purposefully oversimplified in order to illustrate the principles. An outline of a formal treatment of these ideas is provided in Appendix 3.  

Consider a team of investigators who are interested in the effect of antibiotic treatment on mortality in patients with a specific bacterial infection. Since this antibiotic is known to reduce mortality, the investigators need to decide whether to report $RR(-)$ as an approximation of $G$, or alternatively $RR(+)$ as an approximation of $\frac{1}{J}$. In order to ensure external validity, this choice will be determined by their beliefs about which parameter is most likely to be constant across populations.  

The investigators believe that the response to this antibiotic is completely determined by an unmeasured bacterial gene, such that only those who are infected with a bacterial strain with this gene respond to treatment. The prevalence of this bacterial gene is equal between populations, because the populations share the same bacterial ecosystem. If, as seems likely, the investigators further believe that the gene for susceptibility reduces mortality in the presence of antibiotics, but has no effect in the absence of antibiotics, they will conclude that $G$ may be equal between populations. If, on the other hand, they had concluded that the gene for susceptibility causes mortality in the absence of antibiotics but has no effect in the presence of antibiotics, they would instead expect equality of $J$ across populations. 

For many antimicrobial therapies, the microbial genes that determine antibiotic susceptibility generally have functions that are perhaps better approximated by the first approach, and therefore motivates the choice to model the data as if $G$ is equal between the populations. In other situations, in the presence of different subject matter knowledge, similar logic could be used to reach different conclusions. One example of this occurs in pharmacological applications involving adverse reactions to drugs, where it may be reasonable to use the parameter $H$ if the determinants of susceptibility are equally distributed between populations, under certain assumptions about how those determinants interact with the drug. In many realistic applications, it may be more plausible that the COST parameters for introducing treatment are equal than the corresponding parameters for removing treatment, but this is by no means universal: for example, if some humans had retained our ancestors' ability to synthesize Vitamin C endogenously, then the effect of fresh fruit on scurvy might be better modeled based on the parameter $J$.

In most settings, the particular function of the attribute that determines treatment susceptibility will not be known. In such situations, it is necessary to reason on theoretical grounds about which model is a better approximation of reality. One possible approach to determine whether either type of effect equality is biologically plausible would be to consider how an attribute or gene with the necessary function avoided either reaching fixation or being selected out of existence. For example, a genotype that causes the outcome in the presence of exposure will very likely go extinct in a population where everyone is exposed, but may survive in a proportion of a population where everyone is generally unexposed. This equilibrium may be equal between different groups of people (for example, equal between men and women in the same gene pool). Therefore, if treatment was unavailable in recent evolutionary history, the COST parameters for introducing introducing treatment may be more likely to be equal than the COST parameters for removing treatment. Similarly, an attribute that prevents the outcome in the absence of exposure will quickly reach fixation if everyone is unexposed, but its absence may survive in a small, stable fraction of the population if everyone is exposed. Therefore, if everyone were exposed in recent evolutionary history, the COST parameters for removing treatment may be more likely to be equal than the COST parameters for introducing treatment. Thus, this line of reasoning may provide an additional viable argument for choosing the index level of the exposure variable based on the value which it took by default in recent evolutionary history.

\section{Testing for Heterogeneity}

If we believe that the COST parameters for introducing treatment are equal across populations, an empirical implication is that meta-analysis based on $RR(-)$ will be less heterogeneous for exposures which monotonically reduce the incidence of the outcome, whereas meta-analysis based on $RR(+)$ will be less heterogeneous for exposures which monotonically increase the incidence of the outcome. 

However, this observation is complicated by an additional asymmetry associated with the ratio scale: If the outcome is rare (which is usually the case), Pr$(Y^{a=1}=0)$ and Pr$(Y^{a=0}=0)$ will both be close to $1$, and $RR(+)$ will therefore also be close to $1$, even if treatment has a substantial effect. This results in a compression of the $RR(+)$ scale, such that when heterogeneity is measured in terms of the absolute differences between effect sizes, clinically meaningful heterogeneity between populations will only be apparent at the second or even third decimal space. In contrast, heterogeneity on the $RR(-)$ scale generally manifests itself at the first decimal. Therefore, any attempt to measure heterogeneity based on the absolute difference between each study's estimate and the overall meta-analytic estimate will result in higher values for $RR(-)$ than $RR(+)$ for rare outcomes, for reasons that arguably say more about the mathematical differences between the scales than about their relative usefulness for summarizing effect sizes.

One option may be to quantify each study's deviation from the common effect in terms of the lowest possible proportion of individuals whose outcome variable must be ``switched" in order for the effect estimate in the study to equal the overall meta-analytic estimate. Another potential approach to this problem is to use the Risk Difference ($RD$) in place of $RR(+)$ for exposures which increase the incidence of $Y$. In Appendix 4, we show that if the outcome is rare, effect homogeneity on the $RR(+)$ scale implies near-homogeneity on the risk difference scale. This approach is consistent with previous suggestions to consider the ``relative benefits and absolute harms" \cite{Glasziou1995AnTreatment} of medical interventions. Investigators using this approach must further keep in mind the potential differences in power between tests for homogeneity on the additive and multiplicative scales. \cite{Poole2015IsMeasure}

\section{Conclusion}

We have proposed a new approach to considering effect equality across populations, which avoids several well-established shortcomings of definitions based on standard effect measures. Our approach distinguishes equality of the effect of introducing the treatment to a fully untreated population from equality of the effect of removing the treatment from a fully treated population; and therefore requires investigators to reason carefully about the distinction between the two homogeneity conditions. Further, we provided examples of biological models that correspond to each form of effect equality. While the utility of our approach is limited to the restricted range of applications where these biological models are a reasonable approximation of reality, we believe such applications could occur with some frequency when studying the effectiveness and safety of pharmaceuticals. 

If investigators are willing to assume that the COST parameters for introducing treatment are equal between populations, it follows that the standard risk ratio $RR(-)$ should be used for exposures which monotonically reduce the incidence of the outcome, and that the recoded risk ratio $RR(+)$  should be used for exposures which monotonically increase the risk of the outcome. Thus, the risk ratio will generally be constrained between 0 and 1. If the outcome is rare, $RD$ may be used in the place of $RR(+)$ for exposures that increase the incidence of the outcome. If the effect of exposure is not monotonic, the investigator may still choose the risk ratio model based on whether treatment increases or decreases the risk on average; in such situations, the extent of bias will be small if the extent of non-monotonicity is small, or if the populations have comparable baseline risks. This approximation is highly sensitive to violations of these conditions, and if there is reason to suspect substantial non-monotonicity, identification is not feasible and investigators may consider using an alternative approach to effect homogeneity [\cite{Cole2010GeneralizingTrial.,Bareinboim2013AResults}].

\section*{Appendix 1: Properties of COST parameters}

{
\setlength{\parindent}{0pt}

\textit{Valid predictions:} The COST parameter approach results in predictions for Pr$(Y^{a=1} =1 \vert P=t)$  that are valid probabilities, i.e. that are contained in the interval [0,1]. The predictions are generally of the form $\hat{\text{Pr}}(Y^{a=1}=1 \vert P=t) = {\text{Pr}}(Y^{a=0}=1 \vert P=t) \times G + (1-\text{Pr}(Y^{a=0}=1 \vert P=t)  )\times (1-H)$.  In other words, $\hat{\text{Pr}}(Y^{a=1}=1 \vert P=t)$ is a weighted average of $G$ and 1-$H$. Since both $G$ and 1-$H$ are contained in [0,1], the prediction is also in this interval.\newline

\textit{No Zero-Constraints:} Here, we show that for any baseline risk in the target population Pr$(Y^{a=0}=1 \vert P=t)$, there exist possible values of the COST parameters such that Pr$(Y^{a=1} =1 \vert P=t) \neq $Pr$(Y^{a=0}=1 \vert P=t)$.  For any baseline risk other than 0, it can easily be seen that such values exist, for example if $G \neq 0$ and $H=0$. If the baseline risk is zero, such values of the parameters also exist, for example if $G=0$ and $H \neq 0$ \newline

\textit{Baseline risk dependence:} Here we define a measure of effect to be baseline risk dependent if, in order for the parameter to stay equal between populations, it is necessary that the proportion of individuals who respond to treatment (by experiencing the opposite outcome of what they would have experienced in the absence of treatment) varies with baseline risk. One can easily observe that COST parameters are not affected by such baseline risk dependence: This follows almost by definition, since COST parameters were designed specifically to avoid this form of baseline risk dependence. \newline

\textit{Collapsibility:} $G$ is collapsible if, for any baseline covariates $V$, there exist weights $W_v$ such that $G = \frac{\sum{G_v * W_v}}{\sum{W_v}}$. The weights $W_v = $Pr$(V=v \vert Y^{a=0}= 1)$ always satisfy this equation; the proof of this is exactly analogous to the corresponding proof for the risk ratio [\cite{Huitfeldt2016OnFramework}]. Analogously, the weights for $H$ are Pr$(V=v \vert Y^{a=0}= 0)$, the weights for  $I$ are Pr$(V=v \vert Y^{a=1}= 1)$ and the weights for $J$ are Pr$(V=v \vert Y^{a=1}= 0)$. \newline

\textit{Symmetry:} If the coding of the outcome variable is reversed, the only consequence is that $H$ and $G$ change value. To illustrate this, we will discuss variables and parameters that are defined according to the recoded outcome; these are denoted with a star (i.e. $Y_* = (1-Y)$). Recall that we defined $G=\text{Pr}(Y^{a=1}=1\vert Y^{a=0}=1)$. If we reverse the coding, $G_* =\text{Pr}(Y_*^{a=1}=1\vert Y_*^{a=0}=1)$. By replacing $Y_*$ with $Y$, this can be written as $G_*=\text{Pr}(Y^{a=1}=0\vert Y^{a=0}=0)$, which is equal to $H$. The same logic can be used to show that $H_* =G$.\newline

\section*{Appendix 2: Proofs of propositions 1-6}

In the following proofs, we will simplify the notation by defining \(S_0 = \text{Pr}(Y^{a=0} =1 \vert P=s)\), \(S_1 = \text{Pr}(Y^{a=1} =1 \vert P=s)\), \(T_0 = \text{Pr}(Y^{a=0} =1 \vert P=t)\) and \(T_1 = \text{Pr}(Y^{a=1} =1 \vert P=t)\).

\begin{proof}[Proof of Proposition~\ref{proposition1}]

\bigskip Note, first, that the risk under no treatment and the risk under treatment can be rewritten in terms of response types:
\[S_0 = \text{Pr}(Y^{a=0}=1,Y^{a=1}=1)  + \text{Pr}(Y^{a=0}=1,Y^{a=1}=0)\]
\[S_1 = \text{Pr}(Y^{a=0}=1,Y^{a=1}=1)  + \text{Pr}(Y^{a=0}=0,Y^{a=1}=1)\]

Moreover, because the response types are mutually exclusive and collectively exhaustive events, it follows that:

\[1- S_0 = \text{Pr}(Y^{a=0}=0,Y^{a=1}=0) + \text{Pr}(Y^{a=0}=0,Y^{a=1}=1)\]
\[1-S_1 = \text{Pr}(Y^{a=0}=0,Y^{a=1}=0) + \text{Pr}(Y^{a=0}=1,Y^{a=1}=0)\]

Recall that $G$ and $H$ can equivalently be defined as follows:
 
 \[G = \frac{\text{Pr}(Y^{a=0}=1, Y^{a=1}=1)}{\text{Pr}(Y^{a=0}=1, Y^{a=1}=1) + \text{Pr}(Y^{a=0}=1, Y^{a=1}=0)}\]
 
 \[H = \frac{\text{Pr}(Y^{a=0}=0, Y^{a=1}=0)}{\text{Pr}(Y^{a=0}=0, Y^{a=1}=0) + \text{Pr}(Y^{a=0}=0, Y^{a=1}=1)}\]

 Therefore, under our definitions of \textit{G} and \textit{H}, the following relationship holds in any population:  
\[S_1 = S_0 \times G_{\textit{s}} + (1- S_0)\times (1-H_{\textit{s}}).\]

Next, if treatment is monotonically protective, \(H_{\textit{s}} = 1\). The second term is therefore equal to 0, and it follows that \(S_1 = S_0\times G_s\) , and that \(G_s = \frac{S_1}{S_0}= RR(-)_\textit{s}.\)
\end{proof}

\begin{proof}[Proof of Proposition~\ref{proposition2}]
\bigskip
\begin{equation}
\begin{aligned}
T_1&= T_0 \times G_{\textit{t}} + (1- T_0)\times(1-H_{\textit{t}} ) & \text{By same logic as in proposition 1} \\
&= T_0 \times G_{\textit{t}} & \text{By monotonicity} \\
&= T_0 \times G_{\textit{s}} & \text{By equal treatment effects} \\
& \frac{T_1}{T_0} = G_{\textit{s}} = \frac{S_1}{S_0} & \text{By proposition 1} 
\end{aligned}
\end{equation}
\end{proof}

\begin{proof}[Proof of Proposition~\ref{proposition3}]
\bigskip
Define $F$ as \(\frac{T_0}{S_0}\). Then, \(F > 1\) if the baseline risk is higher in the target population, and \(F < 1\) if the baseline risk is lower in the target population. As in the motivating example, our goal is to estimate \(T_1\) from information on  \(T_0\),  \(S_1\) and  \(S_0\). Let \(\hat{T_1}\) be the estimate of \(T_1\). For a protective treatment, if \(\hat{T_1} < T_1\), we can conclude that the prediction is biased away from the null. Our goal is to show that, if the risk ratio is used to transport the effect, and the effects are equal according to Definition 3, then  \( \hat{T_1} < T_1\)  if \(F<1 \). 
If the treatment effects $H$ and $G$ are equal between populations $\textit{s}$ and $\textit{t}$ , we know that the following relationship holds:

\[T_1 = G \times T_0 + (1-H)\times (1-T_0) \]
Alternatively, this can be written in terms of \(S_0\) and $F$:
\[T_1 = G\times S_0\times F + (1-H)\times (1-S_0\times F)\]
The risk ratio $RR(-)$ that will be estimated in population $s$  can be written as 
\[RR(-) = \frac{S_1}{S_0}= \frac{G\times S_0 + (1-H)\times (1-S_0)}{S_0} \]

If we use an approach based on assuming that the risk ratio is equal between the populations, we will estimate \(T_1\) by \( \hat{T_1}\) as follows

\[\hat{T_1} = T_0\times RR(-) = T_0 \times \frac{G \times S_0 + (1-H) \times (1-S_0)}{S_0} = F \times [G\times S_0 + (1-H)\times (1-S_0)]  \]

We will now compare \( \hat{T_1}\) with \(T_1\)  to see which is greater. 
\[T_1 = G \times S_0 \times F + (1-H) \times (1-S_0\times F) = GFS_0 + 1 - H - FS_0 + HFS_0 \]

\[ \hat{T_1} = F\times [G \times S_0 + (1-H) \times (1-S_0)] = GFS_0 + F - FS_0 - FH + HFS_0 \]

Between these two expressions, the terms \(GFS_0\), \(-FS_0\) and \(HFS_0\) and  are shared and cancel. We therefore know that
 
 \[\hat{T_1} < T_1 \text{ if } F-FH  < 1-H  \]
or, equivalently,
 \[\hat{T_1} < T_1 \text{ if }F(1-H)  < 1-H\]

If $H$ is positive, $\hat{T_1} < T_1$  if $F<1$

From the preceding results, we can further derive the bias term \(\hat{T_1} - T_1 = F-FH - 1 + H \), which can be used to graph the amount of bias as a function of the ratio of baseline risks and of the extent of non-monotonicity.
\end{proof}

\begin{proof}[Proof of Proposition~\ref{proposition4}]
\bigskip
As discussed earlier, we know that \(S_1 = S_0 \times G_S + (1-S_0)\times (1-H_{\textit{s}} ) \). If treatment monotonically increases the incidence, \(G_{\textit{s}}  = 1\). We therefore have that \(S_1 = S_0 + (1- S_0) \times (1-H_{\textit{s}})\). Solving this for \(H_{\textit{s}} \), we get
\begin{equation}
\begin{aligned}
H_\textit{s} & = 1- \frac{S_1 - S_0}{1-S_0} \\
  & = \frac{1-S_0 - S_1 + S_0}{1-S_0} \\
  & =  \frac{1-S_1}{1-S_0} \\
  & = RR(+)
\end{aligned}
\end{equation}

\end{proof}

\begin{proof}[Proof of Proposition~\ref{proposition5}]
\bigskip
\begin{equation}
\begin{aligned}
&H_t = \frac{1-T_1}{1-T_0}    &\text {Same logic as in Proposition 4} \\
&H_s	= \frac{1-T_1}{1-T_0} &\text {Equal treatment effects}\\
&H_s	= \frac{1-S_1}{1-S_0} &\text {Proposition 4}\\
&   \frac{1-S_1}{1-S_0}	=  \frac{1-T_1}{1-T_0}\\
\end{aligned}
\end{equation}

\end{proof}

\begin{proof}[Proof of Proposition~\ref{proposition6}]
\bigskip
The logic of the proof of Proposition 6 is exactly analogous to that presented in the context of Proposition 3.

\end{proof}

\section*{Appendix 3}

Consider an unmeasured attribute $X$ that interacts with $A$ to determine treatment response. We will here outline how background biological knowledge can be encoded as restrictions on the joint distribution of counterfactuals of the type $Y^{a,x}$, and show that these restrictions may have implications for effect equality. This will allow us to clarify the link between biology and model choice. For simplicity, we will first consider a situation where response to treatment is fully determined by $X$; and later outline how this assumption can be relaxed. 

For illustration, we will consider an example concerning the effect of treatment with antibiotics ($A$), on mortality ($Y$). We will suppose that response to treatment is fully determined by bacterial susceptibility to that antibiotic ($X$). In the following, we will suppose that attribute $X$ has the same prevalence in populations $s$ and $t$ (for example because the two populations share the same bacterial gene pool) and that treatment with $A$ has no effect in the absence of $X$. Further, suppose that this attribute is independent of the baseline risk of the outcome (for example, old people at high risk of death may have the same strains of the bacteria as young people at low risk). 

In order to get equality of effects between populations $s$ and $t$, we need one further condition: If the attribute $X$ has no effect on $Y$ in the absence of $A$ but prevents $Y$ in the presence of $A$,  the effect of introducing treatment will be equal between the two populations; whereas if $X$ has no effect on $Y$ in the presence of $A$ but causes $Y$ in the absence of $A$, the two populations will have equality of the effect of removing treatment.

The above is formalized with the following conditions:

\begin{enumerate}
\item $X$ is equally distributed in populations $s$ and $t$:  $X \independent P$  
\item  $A$ has no effect in the absence of $X$: $Y^{a=0,x=0}= Y^{a=1,x=0}$ in all individuals
\item 
\begin{enumerate}
\item $X$ has no effect in the absence of $A$: $Y^{a=0,x=0}= Y^{a=0,x=1}$ in all individuals
\item $X$ prevents the outcome in the presence of $A$: $Y^{a=1, x=1} = 0$ in all individuals
\item $X$ is independent of the baseline risk: $X \independent Y^{a=0} \vert P$
\end{enumerate}
\item
\begin{enumerate}
\item $X$ has no effect in the presence of $A$: $Y^{a=1,x=0}= Y^{a=1,x=1}$ in all individuals
\item $X$ causes the outcome in the absence of $A$: $Y^{a=0, x=1} = 1$ in all individuals
\item $X$ is independent of the risk under treatment: $X \independent Y^{a=1} \vert P$
\end{enumerate}
\end{enumerate}

\begin{proposition}
\label{proposition7} If conditions 1, 2 and 3(a-c) hold, then $G$ is equal between populations s and t.  If conditions 1,2 and 4(a-c) hold, then $J$ is equal between the populations. 
\end{proposition}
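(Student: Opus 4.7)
The plan is to stratify each population by the attribute $X$ and use the conditions to pin down the deterministic response type within each stratum, then use the independence and marginal-equality conditions to conclude that $G$ (respectively $J$) depends only on $\Pr(X=0)$, which is the same across populations.

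First I would prove the statement for $G$. Within the $X=0$ stratum, condition 2 forces $Y^{a=0}=Y^{a=1}$, so these individuals contribute to response types ``Doomed'' or ``Immune'' only; in particular $Y^{a=0}=1$ implies $Y^{a=1}=1$. Within the $X=1$ stratum, condition 3(b) forces $Y^{a=1}=0$ unconditionally, so these individuals contribute to response types ``Preventative'' or ``Immune'' only; in particular $Y^{a=0}=1$ implies $Y^{a=1}=0$. Combining, for any population $P$ we have
\[
\Pr(Y^{a=1}=1 \mid Y^{a=0}=1, P) = \Pr(X=0 \mid Y^{a=0}=1, P).
\]
Condition 3(c) ($X \independent Y^{a=0} \mid P$) reduces the right-hand side to $\Pr(X=0 \mid P)$, and condition 1 ($X \independent P$) then gives $\Pr(X=0\mid P=s)=\Pr(X=0\mid P=t)$. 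Thus $G_s = G_t$.

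The proof for $J$ is exactly symmetric with $a$-values swapped: within the $X=0$ stratum, condition 2 again yields $Y^{a=0}=Y^{a=1}$, so $Y^{a=1}=0$ implies $Y^{a=0}=0$; within the $X=1$ stratum, condition 4(b) forces $Y^{a=0}=1$, so $Y^{a=1}=0$ implies $Y^{a=0}=1$. Therefore
\[
\Pr(Y^{a=0}=0 \mid Y^{a=1}=0, P) = \Pr(X=0 \mid Y^{a=1}=0, P),
\]
which by condition 4(c) equals $\Pr(X=0 \mid P)$, and by condition 1 is the same in $s$ and $t$.

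I expect the main conceptual obstacle to be making transparent \emph{why} condition 3(c) (respectively 4(c)) is the right independence to invoke rather than, say, the seemingly stronger $X \independent (Y^{a=0},Y^{a=1})$. The reason is that once the deterministic structure imposed by conditions 2 and 3(b) is taken into account, the joint distribution of $(Y^{a=0},Y^{a=1})$ within a stratum of $X$ is fully determined by the marginal of $Y^{a=0}$ alone, so independence of $X$ from $Y^{a=0}$ within population is exactly what the argument needs. The remainder is a short rearrangement of conditional probabilities and no heavy calculation is required.
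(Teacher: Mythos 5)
Your proof is correct and follows essentially the same route as the paper's: stratify on $X$ by the law of total probability, use condition 2 to show the $X=0$ stratum contributes conditional probability one and condition 3(b) (resp.\ 4(b)) to show the $X=1$ stratum contributes zero, reduce $G$ (resp.\ $J$) to $\Pr(X=0\mid P)$ via 3(c) (resp.\ 4(c)), and finish with condition 1. The only differences are presentational: your response-type phrasing never needs condition 3(a) (resp.\ 4(a)), which the paper's chain of rewritings invokes but which is indeed redundant once 3(b) (resp.\ 4(b)) is in hand, and you spell out the $J$ half, which the paper leaves as ``the same argument.''
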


\begin{proof}[Proof of Proposition~\ref{proposition7}]
\bigskip

\begin{equation}
\begin{aligned}
\small
&G_s =\text{Pr}(Y^{a=1}=1  \vert Y^{a=0}=1, P=s) & & \\
&&\\
&=\text{Pr}(Y^{a=1}=1  \vert Y^{a=0}=1, P=s, X=0)\times \text{Pr}(X=0 \vert  Y^{a=0}=1, P=s)\\
&+ \text{Pr}(Y^{a=1}=1 \vert Y^{a=0}=1, P=s, X=1)\times \text{Pr}(X=1\vert Y^{a=0}=1, P=s)\textit{  (Law of total probability)} \\ 
&&\\
&=\text{Pr}(Y^{a=1, x=0}=1  \vert Y^{a=0, x=0}=1, P=s, X=0) \times \text{Pr}(X=0 \vert  Y^{a=0}=1, P=s)\\
&+ \text{Pr}(Y^{a=1, x=1}=1 \vert Y^{a=0, x=1}=1, P=s, X=1)\times \text{Pr}(X=1\vert Y^{a=0}=1, P=s)\textit{  (Consistency)}  \\ 
&&\\
&=\text{Pr}(Y^{a=1, x=0}=1  \vert Y^{a=0, x=0}=1, P=s, X=0)\times \text{Pr}(X=0 \vert  Y^{a=0}=1, P=s) \\
&+ \text{Pr}(Y^{a=1, x=1}=1 \vert Y^{a=0, x=0}=1, P=s, X=1)\times \text{Pr}(X=1\vert Y^{a=0}=1, P=s)\textit{ (Assumption 3a)}\\
&&\\
&=\text{Pr}(Y^{a=1, x=0}=1  \vert Y^{a=1, x=0}=1, P=s, X=0) \times \text{Pr}(X=0 \vert  Y^{a=0}=1, P=s)\\
&+ \text{Pr}(Y^{a=1, x=1}=1 \vert  Y^{a=1, x=0}=1, P=s, X=1)\times \text{Pr}(X=1\vert Y^{a=0}=1, P=s)\textit{ (Assumption 2)} \\ 
&&\\
& =\text{Pr}(X=0  \vert Y^{a=0}=1, P=s)\textit{ (Assumption 3b)} &\\
&&\\
& =\text{Pr}(X=0  \vert P=s)\textit{( Assumption 3c)}\\
\normalsize
\end{aligned}
\end{equation}
With the same argument, we can show that $G_t=Pr(X=0 \vert P=t)$. By assumption 1, Pr($X=0 \vert P=s)$ and Pr($X=0 \vert P=t$) are equal.
\end{proof}

Results similar to proposition 7 can be shown for attributes $Z$ that are associated with a harmful effect of treatment. This will require conditions for $Z$ that are comparable to 3(a-c), and 4(a-c). We will refer to the conditions that lead to equality of the $H$ parameter as 5(a-c), and the conditions that lead to equality of the $I$ parameter as 6(a-c):

\begin{enumerate}
\setcounter{enumi}{4}
\item
\begin{enumerate}
\item $Z$ has no effect in the absence of $A$: $Y^{a=0,z=0}= Y^{a=0,z=1}$ in all individuals
\item $Z$ causes the outcome in the presence of $A$: $Z^{a=1, z=1} = 1$ in all individuals
\item $Z$ is independent of the baseline risk: $Z \independent Y^{a=0} \vert P$
\end{enumerate}
\item
\begin{enumerate}
\item $Z$ has no effect in the presence of $A$: $Y^{a=1,z=0}= Y^{a=1,z=1}$ in all individuals
\item $Z$ prevents the outcome in the absence of $A$: $Y^{a=0, z=1} = 0$ in all individuals
\item $Z$ is independent of the risk under treatment: $Z \independent Y^{a=1} \vert P$
\end{enumerate}
\end{enumerate}

We now sketch the outline of an argument for how this extends to situations where there is both a protective and a harmful attribute, which together completely determine whether drug $A$ has an effect. Consider the joint counterfactual $Y^{a, x, z}$ where $X$ is an attribute that is associated with a protective effect of treatment with $A$, and $Z$ is an attribute associated with a harmful effect of treatment with $A$, and the joint distribution of $X$ and $Z$ is equal between the populations. It will be necessary that the two attributes are coherent with each other, in the sense that either $X$ meets conditions 3(a-c) and $Z$ meets conditions 5(a-c), or that $X$ meets conditions 4(a-c) and $Z$ meets conditions 6(a-c).

Suppose that for any combination of $X$ and $Z$, treatment with $A$ is either ineffective, protective or harmful. For example, an individual may have a strain of the bacterium that is susceptible to treatment ($X=1)$ , but also have a genetic variant that causes a severe, deadly allergic reaction to the drug ($Z=1$). In this case, we may believe that the allergic reaction supersedes the bacterial susceptibility, i.e. that treatment with $A$ is harmful for this combination of $X$ and $Z$.

For any individual, define $U=0$ if the person belongs to a joint stratum of $X$ and $Z$ such that treatment has no effect (i.e. $Y^{a=1, x, z}$ = $Y^{a=0, x, z}$ in all individuals),  $U=1$ if the individual belongs to a joint stratum of $X$ and $Z$ such that $Y^{a=1, x, z} = 0$ in all individuals, and $U=2$ if he belongs to a joint stratum of $x$,$z$ such that $Y^{a=1, x,z} = 1$. Because no combination of $X$ and $Y$ is associated with both harmful and preventative effects of $A$, this covers all possibilities.

Using logic similar to the proof for singular attributes, we can show that  $G$ is equal to Pr$(U=1)$, $H$ is equal to Pr$(U=2)$, and that these are equal between the two populations. 
 
This can further be extended to multifactorial attributes. Consider the joint counterfactual $Y^{a, \overline{x}, \overline{z}}$ where $\overline{X}$ is a vector of attributes associated with protective effect of treatment, and $\overline{Z}$ is a vector of attributes associated with a harmful effect of treatment. We will suppose that all attributes in $\overline{X}$ either operate according to conditions 3(a-c) or conditions 4(a-c), and that all attributes in $\overline{Z}$ operate according to corresponding conditions 5(a-c) or (6a-c) such that the conditions for $\overline{Z}$ are coherent with the conditions for $\overline{X}, as discussed above.$. This extension will require that for any combination of $\overline{x}$ and $\overline{z}$, treatment with $A$ is either harmful, protective or without effect; if this is believed to be the case, individuals can be assigned to strata of $U$ using the same logic as before.

\section*{Appendix 4}

\begin{proposition}
\label{proposition8} If the outcome is rare and if $RR(+)$ is equal between two populations, then the risk difference is approximately equal between the two populations.
\end{proposition}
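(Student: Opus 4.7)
The plan is to derive an exact algebraic identity relating the risk difference to $RR(+)$ and the baseline risk in a single population, and then show that the quantity that distinguishes the risk difference between the two populations vanishes as the outcome becomes rare. This reduces the proposition to a one-line computation plus a quantitative interpretation of ``rare.''

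First, using the notation $S_0, S_1, T_0, T_1$ from Appendix~2, I would rewrite $RR(+)_s = (1-S_1)/(1-S_0)$ as $1 - S_1 = RR(+)_s \cdot (1-S_0)$, so that
\[
S_1 - S_0 = (1 - S_0) - (1 - S_1) = (1 - RR(+)_s)(1 - S_0).
\]
The identical manipulation in the target population yields $T_1 - T_0 = (1 - RR(+)_t)(1 - T_0)$. Invoking the hypothesis that $RR(+)_s = RR(+)_t$, call this common value $R$, and subtract to obtain
\[
(T_1 - T_0) - (S_1 - S_0) = (1 - R)(S_0 - T_0).
\]

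Second, I would interpret the rare-outcome assumption as $\max(S_0, T_0) \le \varepsilon$ for some small $\varepsilon$. Then $|S_0 - T_0| \le \varepsilon$, and since $1 - R$ is automatically bounded (e.g.\ in absolute value by $1$ whenever the risks lie in $[0,1]$ and one arm has smaller risk than the other, which is the regime of interest), the display above gives $|RD_t - RD_s| \le |1-R|\cdot \varepsilon$. Thus the two risk differences agree up to an error of order $\varepsilon$, which is what ``approximately equal'' means here. I would also note that each $RD$ itself is approximately $1-R$ to the same order, so the relative error is small only when $|1-R|$ is not itself negligible compared to $\varepsilon$.

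There is no real mathematical obstacle; the only judgment call is how to frame ``rare'' and ``approximately'' without sacrificing rigor. I would address this by stating the exact identity for $(T_1 - T_0) - (S_1 - S_0)$ and letting the reader read off the approximation regime, rather than fixing a particular threshold for $\varepsilon$.
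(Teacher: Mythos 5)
Your proof is correct, and it takes a cleaner route than the paper's. The paper starts from the equality $\frac{1-S_1}{1-S_0}=\frac{1-T_1}{1-T_0}$, cross-multiplies, and then discards the bilinear terms $T_0 S_1$ and $S_0 T_1$ on the grounds that they are ``close to zero'' when the outcome is rare, leaving the approximate equality $S_1-S_0 \approx T_1-T_0$; it never quantifies the error introduced by dropping those terms. You instead factor each risk difference exactly as $RD = (1-R)(1-\text{baseline risk})$ with $R$ the common $RR(+)$, which yields the exact identity $RD_t - RD_s = (1-R)(S_0-T_0)$ and hence the explicit bound $|RD_t-RD_s|\le |1-R|\,\max(S_0,T_0)$. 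This buys two things the paper's proof lacks: a precise meaning for ``approximately equal'' (error of order $\varepsilon$ in the baseline risks), and the useful observation that $RD$ itself is of size $|1-R|$, so one can see when the approximation is good in relative as well as absolute terms. One small caveat: your parenthetical claim that $|1-R|\le 1$ ``whenever the risks lie in $[0,1]$'' is not quite right in general, since $1-R=(S_1-S_0)/(1-S_0)$ can have magnitude up to $S_0/(1-S_0)$ when treatment is protective and $S_0>1/2$; but in the rare-outcome regime you are working in, $S_0$ is small and the bound holds, so this does not affect the argument.
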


\begin{proof}[Proof of Proposition~\ref{proposition8}]
\bigskip
\begin{equation}
\begin{aligned}
&\text{If $RR(+)$ is equal between the two populations, the following relationship holds:}\\
&\\
&\frac{1-\text{Pr}(Y^{a=1}\vert P=s)}{1-\text{Pr}(Y^{a=0}\vert P=s)} =\frac{1-\text{Pr}(Y^{a=1}=1\vert P=t)}{1-\text{Pr}(Y^{a=0}=1\vert P=t)}\\
&\\
&\text{This can be rewritten as:}\\
&\\
&1-\text{Pr}(Y^{a=1}\vert P=s) - \text{Pr}(Y^{a=0}=1\vert P=t) + \text{Pr}(Y^{a=0}=1\vert P=t)\times \text{Pr}(Y^{a=1}\vert P=s) \\
&= 1-\text{Pr}(Y^{a=1}\vert P=t) - \text{Pr}(Y^{a=0}=1\vert P=s) + \text{Pr}(Y^{a=0}=1\vert P=s)\times \text{Pr}(Y^{a=1}\vert P=t)\\
&\\
&\text{If the outcome is rare, the product terms on both sides are close to zero:}\\
&\\
&1-\text{Pr}(Y^{a=1}\vert P=s) - \text{Pr}(Y^{a=0}=1\vert P=t) = 1-\text{Pr}(Y^{a=1}\vert P=t) - \text{Pr}(Y^{a=0}=1\vert P=s)\\
&\\
&\text{This can be rewritten as:}\\
&\\
& \text{Pr}(Y^{a=1}\vert P=s) - \text{Pr}(Y^{a=0}=1\vert P=s) = \text{Pr}(Y^{a=1}\vert P=t) - \text{Pr}(Y^{a=0}=1\vert P=t)\\
&\\
&\text{The left side of this expression is RD in population $s$ and the right side is RD in population $p$:}\\
&\\
& RD_s = RD_t\\
\end{aligned}
\end{equation}
\end{proof}

\bibliographystyle{unsrt}
\bibliography{Mendeley.bib}

\section*{Acknowledgements}
The authors thank James Robins for suggesting the relationship between $RR(+)$ and the Risk Difference,  Etsuji Suzuki for extensive comments on an earlier draft of the manuscript, and Miguel Hernan, Ryan Seals and Steve Goodman for discussions and for insights that improved the manuscript. 

\section*{Author Contributions} AH had the original idea, provided the original version of the theorems and proofs, wrote the first draft of the manuscript and coordinated the research project. AG and SAS contributed original intellectual content and extensively restructured and revised the manuscript. All authors approved the final version of the manuscript.

\section*{Correspondence} All correspondence should be directed to Anders Huitfeldt at The Meta-Research Innovation Center at Stanford, Stanford University School of Medicine, 1070 Arastradero Road, Palo Alto CA 94303; e-mail: anders@huitfeldt.net

Anonymous feedback is welcomed at  http://www.admonymous.com/effectmeasurepaper 

Anders Huitfeldt invokes Crocker's Rules (http://sl4.org/crocker.html) on behalf of all authors for all anonymous and non-anonymous feedback on this manuscript.

\section*{Funding} The authors received no specific funding for this work. While this research was conducted, Dr. Huitfeldt was supported by the Meta-Research Innovation Center at Stanford, which is partly funded by a grant from the Laura and John Arnold Foundation. Dr. Goldstein is supported by National Library of Medicine Training Grant T15LMLM007079. Dr. Swanson is supported by DynaHEALTH grant (European Union H2020-PHC- 2014; 633595).

\end{document}